\newcommand{\bm}{\mathbf} 
\newcommand{\be}{\begin{equation}}
\newcommand{\ee}{\end{equation}}
\newcommand{\bse}{\begin{subequations}}
\newcommand{\ese}{\end{subequations}}
\newcommand{\bea}{\begin{eqnarray}}
\newcommand{\eea}{\end{eqnarray}}
\newcommand{\ba}{{\bm a}}
\newcommand{\bb}{{\bm b}}
\newcommand{\bA}{{\bm A}}
\newcommand{\bD}{{\bf D}}
\newcommand{\bG}{{\bf G}}
\newcommand{\bH}{{\bf H}}
\newcommand{\bd}{{\bf d}}
\newcommand{\by}{{\bf y}}
\newcommand{\eye}{{\bm I}}
\newcommand{\I}{{\bm I }}
\newcommand{\BW}{{\boldsymbol{\mathcal W}}}
\newcommand{\bnu}{\mbox{\boldmath$\nu$}}
\theoremstyle{definition}
\newtheorem{proposition}{Proposition}
\title{Prototype Filter Design for FBMC \\in Massive MIMO Channels \vspace{-0.0cm}}
\author{\normalsize Amir Aminjavaheri$^\dagger$, Arman Farhang$^*$, Linda E. Doyle$^*$, and Behrouz Farhang-Boroujeny$^\dagger$ \\
$^\dagger$ECE Department, University of Utah, Salt Lake City, Utah, USA, \\
$^*$CONNECT, The Telecommunications Research Centre, Trinity College Dublin, Ireland. \\
Email: \{aminjav, farhang\}@ece.utah.edu, \{farhanga, ledoyle\}@tcd.ie 

\thanks{This publication has emanated from research supported in part by a research grant from Science Foundation Ireland (SFI) and is co-funded under the European Regional Development Fund under Grant Number 13/RC/2077.}

\vspace{-0.0cm}}
\begin{document}

\maketitle

\begin{abstract}
We perform an asymptotic study on the performance of filter bank multicarrier (FBMC) in the context of massive multi-input multi-output (MIMO). We show that the signal-to-interference-plus-noise ratio (SINR) cannot grow unboundedly by increasing the number of base station (BS) antennas, and is upper bounded by a certain deterministic value. This is a result of the correlation between the multi-antenna combining tap values and the channel impulse responses between the terminals and the BS antennas. To solve this problem, we introduce a simple FBMC prototype filter design method that removes this correlation, enabling us to achieve arbitrarily large SINR values by increasing the number of BS antennas. 
\end{abstract}

\begin{IEEEkeywords} 
Massive MIMO, FBMC/OQAM, OFDM, SINR, channel equalization, asymptotic analysis. \vspace{-0.0cm}
\end{IEEEkeywords}

\section{Introduction}

Massive multiple-input multiple-output (MIMO) is one of the key technologies currently considered for the fifth generation (5G) of cellular networks. In a massive MIMO system, the base station (BS) is equipped with a large number of antennas, in the order of a hundred or a few hundreds, and is simultaneously serving tens of users. By increasing the number of BS antennas, the effects of uncorrelated noise and multiuser interference can be made arbitrarily small, \cite{marzetta2010noncooperative,ngo2013energy}, and hence, unprecedented network capacities can be achieved.

Due to its simplicity and robustness to multipath channels, orthogonal frequency division multiplexing (OFDM) is the dominant modulation format that is considered in the massive MIMO literature, \cite{marzetta2010noncooperative}, as well as most of the current wireless standards such as the 4G long term evolution (LTE). However, despite its many advantages, OFDM suffers from a number of drawbacks. In particular, due to the high side-lobe levels of the subcarriers, OFDM suffers from a large spectral leakage leading to high out-of-band emissions. Accordingly, stringent synchronization procedures are required in the uplink of multiuser networks. The users may experience different Doppler shifts, frequency offsets, timing offsets, etc., and maintaining the orthogonality between the subcarriers may not be possible without energy-consuming and resource-demanding procedures. Furthermore, utilization of non-contiguous spectrum chunks through carrier aggregation for the future high data rate applications is not possible in the uplink with OFDM as a result of high side-lobe levels of its subcarriers, \cite{iwamura2010carrier}. Moreover, to avoid interference, large guard bands are required between adjacent frequency channels, which in turn, lowers the spectral efficiency of OFDM. It should be emphasized that more strict requirements in terms of data rate, energy efficiency, and latency are defined for the 5G networks compared to the current ones in LTE, \cite{banelli2014modulation}. Therefore, the aforementioned shortcomings of OFDM and the requirements of 5G networks have stirred a great deal of interest in the area of waveform design among the research and industrial communities motivating introduction of alternative waveforms capable of keeping the advantages of OFDM while addressing its drawbacks, \cite{banelli2014modulation,Farhang2016,schaich2014waveform,farhang2014massive}.

Filter bank multicarrier (FBMC) is a 5G candidate waveform offering a significantly improved spectral properties over OFDM, by shaping the subcarriers using a prototype filter that is well-localized in both time and frequency, \cite{farhang2011ofdm}. Therefore, the uplink synchronization requirements can be significantly relaxed, \cite{aminjavaheri2015impact}, and carrier aggregation becomes a trivial task, \cite{perez2015mimo}. As a result of the above advantages, FBMC is currently being considered as an enabling technology in various research and industrial projects; see \cite{perez2015mimo} and the references therein.

The application of FBMC in massive MIMO channels has been recently studied in \cite{armanfarhang2014filter}, where its so-called self-equalization property leading to a channel flattening effect was reported through simulations. According to this property, the effects of channel distortions (i.e., intersymbol interference and intercarrier interference) will diminish by increasing the number of BS antennas. In \cite{aminjavaheri2015frequency}, multi-tap equalization is proposed for FBMC-based massive MIMO to improve the equalization accuracy compared to the single-tap equalization per subcarrier at the expense of a higher computational complexity. The authors in \cite{farhang2014pilot} show that the pilot contamination problem in multi-cellular massive MIMO networks, \cite{marzetta2010noncooperative}, can be resolved in a straightforward manner with FBMC signaling due to its special structure. These studies prove that FBMC is an appropriate match for massive MIMO and vice versa as they can both bring pivotal properties into the picture of 5G systems. Specifically, this combination is of a great importance as not only the same spectrum is being utilized by all the users but it is also used in a more efficient manner.

Since the literature on FBMC-based massive MIMO is not mature yet, these systems need to go through meticulous analysis and investigation. Hence, in this paper, we perform an in-depth analysis on the performance of FBMC in massive MIMO. We show that the self-equalization property shown through simulations and claimed in \cite{armanfarhang2014filter} and \cite{aminjavaheri2015frequency} is not very accurate. More specifically, by increasing the number of BS antennas, the channel distortions average out only up to a certain extent, but not completely. Thus, the SINR saturates at a certain deterministic level. This determines an upper bound for the SINR performance of the system. We derive an analytical expression for this saturation level, and propose a prototype filter design method to resolve the problem. With the proposed prototype filter in place, SINR grows without a bound by increasing the BS array size, and arbitrarily large SINR values are achievable.

It is worth mentioning that although the theories developed in this paper are applicable to all types of FBMC systems, the formulations are based on the most common type in the literature that was developed by Saltzberg, \cite{saltzberg1967performance}, and is known by different names including OFDM with offset quadrature amplitude modulation (OFDM/OQAM), FBMC/OQAM, and staggered multitone (SMT), \cite{farhang2011ofdm}. Throughout this paper, we refer to it as FBMC for simplicity.

The rest of the paper is organized as follows. To pave the way for the derivations presented in the paper, we review the FBMC principles in Section \ref{sec:system_model}. In Section \ref{sec:mmimo_fbmc}, we present the asymptotic equivalent channel model between the mobile terminals and the BS in an FBMC massive MIMO setup. This analysis will lead to an upper bound for the SINR performance of the system. Our proposed prototype filter design method is introduced in Section \ref{sec:prototype_modify}. The mathematical analysis of the paper as well as the efficacy of the proposed filter design technique are numerically evaluated in Section \ref{sec:numerical_results}. Finally, we conclude the paper in Section \ref{sec:conclusion}.

\textit{Notations:} Matrices, vectors and scalar quantities are denoted by boldface uppercase, boldface lowercase and normal letters, respectively. $[\bA]_{mn}$ represents the element in the $m^{\rm{th}}$ row and $n^{\rm{th}}$ column of $\bA$ and $\bA^{-1}$ signifies the inverse of $\bA$. $\I_M$ is the identity matrix of size $M\times M$. The superscripts $(\cdot)^{\rm T}$, $(\cdot)^{\rm H}$ and $(\cdot)^\ast$ indicate transpose, conjugate transpose, and conjugate operations, respectively. Also, $\ast$ represents the linear convolution, $\mathbb{E}\{\cdot\}$ denotes the expected value of a random variable, and $\Re\{\cdot\}$ signifies the real part of a complex number. The notation $\mathcal{CN}(0,\sigma^2)$ represents the circularly-symmetric complex normal distribution with zero mean and variance $\sigma^2$. Finally, $\delta_{ij}$ represents the Kronecker delta function.

\section{FBMC Principles} \label{sec:system_model}

We present the theory of FBMC in discrete time. Let $d_{m,n}$ denote the real-valued data symbol transmitted over the $m^{\rm th}$ subcarrier and $n^{\rm th}$ symbol time index. The total number of subcarriers is assumed to be $M$. To avoid the interference between the symbols and maintain the orthogonality, the data symbol $d_{m,n}$ should be phase adjusted using the phase term $e^{j\theta_{m,n}}$, where $\theta_{m,n} = \frac{\pi}{2}(m+n)$. Accordingly, each symbol has a $\pm\frac{\pi}{2}$ phase difference with its adjacent neighbors in time and frequency. The symbols are then pulse-shaped using the prototype filter $p(l)$, which has been designed such that $q(l) = p(l) \ast p^*(-l)$ is a Nyquist pulse with zero crossings at $M$ sample intervals. To express the above procedure in a
mathematical form, the discrete-time FBMC waveform can be written as, \cite{farhang2014filter},
\bea \label{eqn:fbmc_waveform}
x(l) = \sum_{n=-\infty}^{+\infty} \sum_{m=0}^{M-1} d_{m,n} a_{m,n}(l),
\eea
where 
\begin{align*}
a_{m,n}(l) &= p_m(l - nM/2) e^{j\theta_{m,n}}, 
\end{align*}
and $p_m(l) = p(l) e^{j\frac{2\pi ml}{M}}$ is the prototype filter modulated to the center frequency of subcarrier $m$. The functions $a_{m,n}(l)$ can be thought as a set of basis functions that are used to modulate the data symbols. Note that the spacing between successive symbols in the time domain is $M/2$ samples. In the frequency domain, the spacing between successive subcarriers is $1/M$ in normalized frequency. It can be shown that the basis functions $a_{m,n}(l)$ are orthogonal in the real domain, \cite{farhang2014filter}, i.e.,
\begin{align} \label{eqn:orthogonality}
\langle a_{m,n}(l), a_{m',n'}(l) \rangle_\Re &= \Re \bigg\{ \sum_{l = -\infty}^{+\infty} a_{m,n}(l) a^*_{m',n'}(l) \bigg\} \nonumber \\
&= \delta_{mm'} \delta_{nn'} .
\end{align}
Hence, the data symbols can be extracted from the synthesized signal, $x(l)$, according to
\be
d_{m,n} = \langle x(l), a_{m,n}(l) \rangle_\Re.
\ee
Fig.~\ref{fig:block_diagram} shows the block diagram of the FBMC transceiver. Note that considering the transmitter prototype filter $p(l)$, and the receiver prototype filter $p^*(-l)$, the overall effective pulse shape $q(l) = p(l) \ast p^*(-l)$ is a Nyquist pulse by design. Also, in practice, in order to implement the synthesis (transmitter side) and analysis (receiver side) filter banks efficiently, one can incorporate the polyphase implementation of filter banks to reduce the computational complexity, \cite{farhang2014filter}.

\begin{figure*}[!t]
\centering
\includegraphics[scale=0.85]{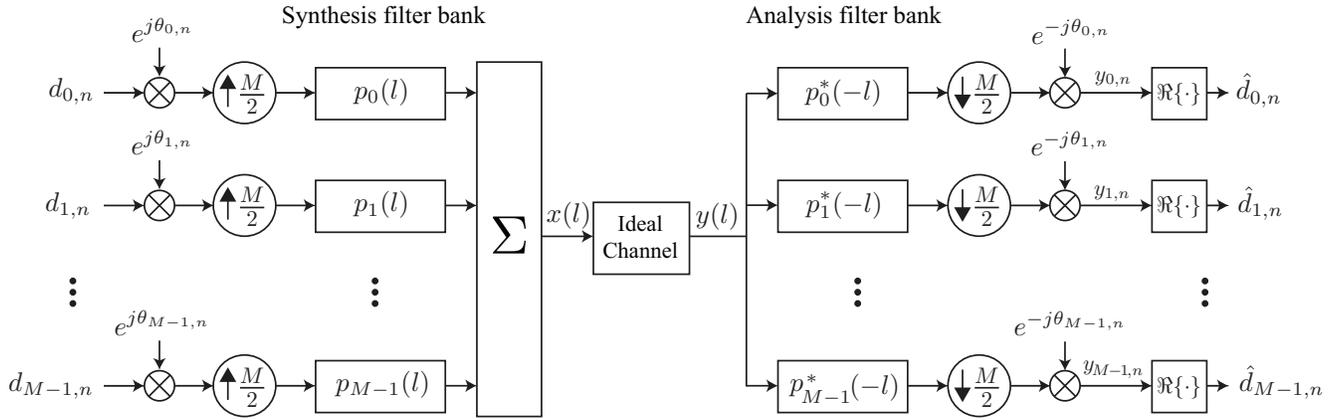}
\caption{Block diagram of the FBMC transceiver in discrete time. } \vspace{-0.2cm}
\label{fig:block_diagram}
\end{figure*}

The presence of a frequency-selective channel incurs some interference on the received symbols, and thus, one may adopt some sort of equalization to retrieve the transmitted symbols at the receiver side. Let $h(l)$ denote the impulse response of the channel. In this paper, we limit our study to a case where the channel impulse response remains time invariant over the interval of interest. Hence, the received signal at the receiver can be expressed as 
\begin{align} \label{eqn:y}
y(l) &= h(l) \ast x(l) + \nu(l) = \sum_{\ell=0}^{L-1} h(\ell) x(l-\ell) + \nu(l) ,
\end{align}
where $L$ is the length of the channel impulse response, and $\nu(l)$ is the additive noise.

At the receiver side, after matched filtering and phase compensation, and before taking the real part (see Fig.~\ref{fig:block_diagram}), the demodulated signal $y_{m,n}$ can be expressed as
\be \label{eqn:demod_symbol}
y_{m,n} = \sum_{n'=-\infty}^{+\infty} \sum_{m'=0}^{M-1} H_{mm',nn'} \hspace{1pt} d_{m',n'} + \nu_{m,n} ,
\ee
where $\nu_{m,n}$ is the noise contribution, and the interference coefficient $H_{mm',nn'}$ can be calculated according to 
\bse \label{eqn:siso_equiv_chan}
\begin{align} 
H_{mm',nn'} & =  h_{mm'}(n-n') \hspace{2pt} e^{j(\theta_{m',n'}-\theta_{m,n})},  \\
h_{mm'}(l) & = \big(p_{m'}(l) \ast h(l) \ast p_m^\ast(-l)\big)_{\downarrow \frac{M}{2}}. 
\end{align}
\ese
The symbol $\downarrow \frac{M}{2}$ denotes decimation with the rate of $\frac{M}{2}$. In (\ref{eqn:siso_equiv_chan}), $h_{mm'}(l)$ is the equivalent channel impulse response between the transmitted symbols at subcarrier $m'$ and the received ones at subcarrier $m$. This includes the effects of the transmitter pulse-shaping, the multipath channel, and the receiver pulse-shaping; see Fig.~\ref{fig:block_diagram}. According to (\ref{eqn:demod_symbol}), the demodulated symbol $y_{m,n}$ undergoes interference originating from other time-frequency symbols. In practice, the prototype filter $p(l)$ is designed to be well localized in time and frequency. As a result, the interference is limited to a small number of neighboring symbols around the desired time-frequency point $(m,n)$.

In order to devise a simple equalizer to combat the frequency-selective effect of the channel, it is usually assumed that the symbol period, $M/2$, is relatively large compared to the channel length, $L$. With this assumption, the demodulated signal $y_{m,n}$ can be expressed as, \cite{lele2008channel},
\be \label{eqn:siso_demod_symbol2}
y_{m,n} \approx H_{m} \big( d_{m,n} + u_{m,n} \big) + \nu_{m,n},
\ee
where $H_m \triangleq  \sum_{\ell=0}^{L-1} h(\ell)  e^{-j\frac{2\pi m\ell}{M}}$ is the channel frequency response at the center of the $m^{\rm th}$ subcarrier. The term $u_{m,n}$ is called the \emph{intrinsic interference} and is purely imaginary. This term represents the contribution of the intersymbol interference (ISI) and intercarrier interference (ICI) from the adjacent time-frequency symbols around the desired point $(m,n)$. Based on (\ref{eqn:siso_demod_symbol2}), the effect of channel distortions can be compensated using a single-tap equalizer per subcarrier. After equalization, what remains is the real-valued data symbol $d_{m,n}$, the imaginary term $u_{m,n}$, and the noise contribution. Finally, by taking the real part from the equalized symbol, one can remove the intrinsic interference and obtain an estimate of $d_{m,n}$.

It should be noted that the performance of the above single-tap equalization primarily depends on the validity of the assumption that the symbol duration is much larger than the channel length. However, in highly frequency-selective channels, where the above assumption is not accurate, more advanced equalization methods should be deployed to counteract the channel distortions, \cite{perez2015mimo}.

\section{Massive MIMO FBMC: Asymptotic Analysis} \label{sec:mmimo_fbmc}

In this section, we extend the formulation of the previous section to massive MIMO channels to be used in our subsequent asymptotic analysis. Then, we show that linear combining of the signals received at the BS antennas, using the channel frequency coefficients, leads to a residual interference even with an infinite number of BS antennas. Hence, the SINR is upper bounded by a certain deterministic value, and arbitrarily large SINR performances cannot be achieved as the number of BS antennas grows large. In the subsequent section, we show that this problem can be resolved through a simple prototype filter design method.

We consider a single-cell massive MIMO setup \cite{marzetta2010noncooperative}, with $K$ single-antenna mobile terminals (MTs) that are simultaneously communicating with a BS equipped with an array of $N$ antenna elements. As mentioned earlier, in this paper, we consider the uplink transmission while the results and our proposed technique are trivially applicable to the downlink transmission as well.

Let $x_k(l)$ represent the transmit signal of the terminal $k$. The received signal at the $i^{\rm th}$ BS antenna can be obtained as
\be \label{eqn:y_i}
y_i(l) = \sum_{k=0}^{K-1} x_k(l) \ast h_{i,k}(l) + \nu_i(l),
\ee
where $h_{i,k}(l)$ is the channel impulse response between the $k^{\rm th}$ terminal and the $i^{\rm th}$ BS antenna, and $\nu_i(l)$ is the additive noise at the input of the $i^{\rm th}$ BS antenna. We assume that the samples of the noise signal $\nu_i(l)$ are a set of independent and identically distributed (i.i.d.) $\mathcal{CN}(0,\sigma_\nu^2)$ random variables and the channel tap $h_{i,k}(l)$, $l \in \{0,\dots,L-1\}$, follows a $\mathcal{CN}(0,\rho(l))$ distribution.  Moreover, we assume that the channels corresponding to different terminals and different BS antennas are independent. Here, $\rho(l), l = 0,\dots,L-1$, is the channel power delay profile (PDP). Throughout this paper, we assume that the channel PDP is normalized such that $\sum_{l=0}^{L-1} \rho(l) = 1$. Moreover, we assume that for each terminal, the average transmitted power is equal to one, i.e., $\mathbb{E}\{ |x_k(l)|^2\} = 1$. As a result, considering the above channel model, the signal-to-noise ratio (SNR) at the input of the BS antennas can be calculated as $\text{SNR} = {1}/{\sigma_\nu^2}$. To simplify the analysis throughout the paper, we assume that the BS has a perfect knowledge of the channel state information (CSI).

Using (\ref{eqn:y_i}) and extending (\ref{eqn:demod_symbol}) to the MIMO case, we have
\be \label{eqn:mimo_demod_symbol1}
\by_{m,n} = \sum_{n'=-\infty}^{+\infty} \sum_{m'=0}^{M-1}  \bH_{mm',nn'} \hspace{2pt} \bd_{m',n'} + \bnu_{m,n},
\ee
where the $N \times 1$ vector $\by_{m,n}$ contains the demodulated symbols across different BS antennas and corresponding to the $(m,n)$ time-frequency point. The vector $\bnu_{m,n}$ contains the noise contributions across different BS antennas. $\bd_{m,n}$ contains the data symbols of all the MTs transmitted at the point $(m,n)$. $\bH_{mm',nn'}$ is an $N\times K$ matrix with its element $ik$, denoted by $H_{mm',nn'}^{i,k}$, representing the interference coefficient corresponding to the channel $h_{i,k}(l)$. The interference coefficient $H_{mm',nn'}^{i,k}$ can be calculated similar to (\ref{eqn:siso_equiv_chan}) as
\bse \label{eqn:mimo_equiv_chan}
\begin{align} 
H^{i,k}_{mm',nn'} & =  h^{i,k}_{mm'}(n-n') \hspace{2pt}  e^{j(\theta_{m',n'}-\theta_{m,n})},  \\
h^{i,k}_{mm'}(l) & = \big(p_{m'}(l) \ast h_{i,k}(l) \ast p_m^\ast(-l)\big)_{\downarrow \frac{M}{2}}. 
\end{align}
\ese

We assume that the BS utilizes a single-tap equalizer per subcarrier. Combining the elements of $\by_{m,n}$ through an $N\times K$ combining matrix $\BW_m$ and taking the real part of the resulting signal, the estimate of the transmitted data symbols for all the MTs can be obtained as
\begin{align} \label{eqn:mimo_est_d1}
\hat{\bd}_{m,n} &= \Re \left\{ \BW_m^{\rm H} \-\ \by_{m,n} \right\} \nonumber \\
&= \Re \Big\{ \sum_{n'=-\infty}^{+\infty} \sum_{m'=0}^{M-1} \BW_m^{\rm H} \bH_{mm',nn'} \bd_{m',n'}  + \BW_m^{\rm H} \bnu_{m,n} \Big\} \nonumber \\
&= \Re \Big\{ \sum_{n'=-\infty}^{+\infty} \sum_{m'=0}^{M-1}  \bG_{mm',nn'} \bd_{m',n'}  + \bnu'_{m,n} \Big\}  ,
\end{align} 
where $\bG_{mm',nn'} \triangleq \BW_m^{\rm H} \bH_{mm',nn'}$, and $\bnu'_{m,n} \triangleq \BW_m^{\rm H} \bnu_{m,n}$. In this paper, we consider three linear combiners, namely, maximum-ratio combining (MRC), zero-forcing (ZF), and minimum mean-square error (MMSE). These combiners can be obtained as, \cite{ngo2013energy},
\be \label{eqn:mrc_zf_mmse}
\BW_m = 
\begin{cases}
	\bH_m \bD_m^{-1} ,  & {\rm for~~ MRC}, \\
	\bH_m \left( \bH_m^{\rm H} \bH_m \right)^{-1},	   & {\rm for~~ ZF}, \\
	\bH_m \left( \bH_m^{\rm H} \bH_m + \sigma_\nu^2 \eye_K \right)^{-1},	   & {\rm for~~ MMSE}, 
\end{cases}
\ee
where $\bH_m$ is the matrix of channel coefficients at the center of $m^{\rm th}$ subcarrier, i.e., $\left[\bH_m\right]_{ik} = H_m^{i,k} \triangleq \sum_{l=0}^{L-1} h_{i,k}(l) e^{-j \frac{2\pi ml}{M}}$. In the MRC case, the $K\times K$ normalization matrix $\bD_m$ is a diagonal matrix that contains the squared norm of the $k^\mathrm{th}$ column of $\bH_m$ on its $k^{\rm th}$ diagonal element, i.e., $[\bD_m]_{kk} = \sum_{i=0}^{N-1} |H_m^{i,k}|^2$. Note that according to the law of large numbers, $\bD_m$ tends to $N \I_K$ as the number of BS antennas increases. In the following and to simplify the formulations, we only consider the case of MRC. We then show that the results are also applicable to the cases of ZF and MMSE as the number of BS antennas grows large.


Before we proceed, we review some results from probability theory. Let $\ba = [a_1,\dots,a_n]^{\rm T}$ and $\bb = [b_1,\dots,b_n]^{\rm T}$ be two random vectors each containing i.i.d. elements. Moreover, assume that $i^{\rm th}$ elements of $\ba$ and $\bb$ are correlated according to $\mathbb{E}\big\{ a_i^* b_i \big\} = C_{ab}$, $i = 1,\dots,n$. Consequently, according to the law of large numbers, the random variable $\frac{1}{n} \ba^{\rm H} \bb$ converges almost surely to $C_{ab}$ as $n$ tends to infinity.

In the asymptotic regime, i.e., as $N$ tends to infinity, the elements of $\bG_{mm',nn'} = \BW_m^{\rm H} \bH_{mm',nn'}$ can be calculated using the law of large numbers. Let $G^{kk'}_{mm',nn'}$ denote the element $kk'$ of $\bG_{mm',nn'}$. In the case of MRC, as $N$ grows large, $G_{mm',nn'}^{kk'}$ converges almost surely to
\be\label{eqn:Gkk'}
G^{kk'}_{mm',nn'} \rightarrow \mathbb{E} \Big\{ \left( H_{m}^{i,k} \right)^* H_{mm',nn'}^{i, k'} \Big\} .
\ee
To calculate the right hand side of (\ref{eqn:Gkk'}), we use (\ref{eqn:mimo_equiv_chan}) to find the equivalent channel impulse response between the transmitted data symbols and the received ones after combining the signals across different BS antennas. To this end, as $N$ grows large, the equivalent channel impulse response between the transmitted symbols at subcarrier $m'$  of MT $k'$ and the received ones at subcarrier $m$ of MT $k$ tends to\footnote{Note that in (\ref{eqn:Gkk'}) and (\ref{eqn:asym_equiv_chan}), we have used the letters $G$ and $g$, respectively, to denote the equivalent channel coefficients \emph{after combining}. On the other hand, letters $H$ and $h$ have been used in (\ref{eqn:mimo_equiv_chan}), to refer to the respective channel coefficients \emph{before combining}.}
\begin{align} \label{eqn:asym_equiv_chan}
&g_{mm'}^{kk'}(l) \rightarrow \mathbb{E} \Big\{ \left( H_{m}^{i,k} \right)^* \Big(p_{m'}(l) \ast h_{i, k'}(l) \ast p_m^\ast(-l)\Big)_{\downarrow \frac{M}{2}} \Big\} \nonumber \\
&\hspace{10pt}=  \Big( p_{m'}(l) \ast \mathbb{E} \Big\{ \left( H_{m}^{i,k} \right)^* h_{i,k'}(l) \Big\} \ast p_m^\ast(-l)\Big)_{\downarrow \frac{M}{2}} .
\end{align}
The above expression includes a correlation between the channel frequency coefficient $H_m^{i,k}$ and the channel impulse response $h_{i,k'}(l)$. This correlation can be calculated as
\begin{align} \label{eqn:channel_corr}
\mathbb{E} \Big\{ \left( H_{m}^{i,k} \right)^* h_{i, k'}(l) \Big\} &= \sum_{\ell = 0}^{L-1} \mathbb{E} \left\{ h_{i,k}^*(\ell) h_{i,k'}(l) \right\} e^{j\frac{2 \pi \ell m}{M}}  \nonumber \\
&= \rho(l) e^{j\frac{2 \pi l m}{M}} \delta_{kk'} = \rho_m(l) \delta_{kk'} ,
\end{align}
where $\rho_m(l) \triangleq \rho(l) e^{j\frac{2 \pi l m}{M}}$ .

\begin{proposition} \label{prp:saturation1}
In an FBMC-based massive MIMO system, as the number of BS antennas tends to infinity, the effects of multiuser interference and noise vanish. However, some residual ISI and ICI from the same user remain. In particular, for a given user $k$, the equivalent channel impulse response between the transmitted data symbols at subcarrier $m'$ and the received ones at subcarrier $m$ tends to
\be \label{eqn:MRC_equiv_response}
g^{kk}_{mm'}(l) \rightarrow \Big(p_{m'}(l) \ast \rho_m(l) \ast p_m^\ast(-l)\Big)_{\downarrow \frac{M}{2}} .
\ee
As a result, the SINR saturates to 
\be \label{eqn:SINR_sat}
{\rm SINR}^k_{m,n} \rightarrow \frac{ \Re^2 \big\{ G^{kk}_{mm,nn} \big\} }{ \mathop{\sum\limits_{n'=-\infty}^{+\infty} \sum\limits_{m'=0}^{M-1}}\limits_{(m',n')\neq(m,n)} \Re^2 \big\{ {G}^{kk}_{mm',nn'} \big\}} ,
\ee
where $G^{kk}_{mm',nn'} = {g}^{kk}_{mm'}(n-n') \hspace{1pt} e^{j(\theta_{m',n'}-\theta_{m,n})}$. 
\end{proposition}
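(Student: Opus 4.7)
The plan is to unpack the consequences of the LLN-based limits already derived in equations (\ref{eqn:Gkk'})--(\ref{eqn:channel_corr}) and assemble them into the asymptotic input-output relation, then read the SINR off directly. I would start from (\ref{eqn:mimo_est_d1}), where the combined signal is expressed in terms of $\bG_{mm',nn'}=\BW_m^{\rm H}\bH_{mm',nn'}$. Substituting the MRC weight $\BW_m=\bH_m\bD_m^{-1}$ and noting that $\frac{1}{N}\bD_m\to \I_K$ almost surely (since $\mathbb{E}\{|H_m^{i,k}|^2\}=\sum_\ell \rho(\ell)=1$), the generic entry becomes $[\bG_{mm',nn'}]_{kk'}=\frac{1}{N}\sum_i (H_m^{i,k})^* H^{i,k'}_{mm',nn'}+o(1)$, which by the law of large numbers converges to the deterministic value in (\ref{eqn:Gkk'}).

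Next I would split this expectation by cases. For $k\neq k'$, the channels $h_{i,k}(\cdot)$ and $h_{i,k'}(\cdot)$ are independent and zero-mean by assumption, so the correlation (\ref{eqn:channel_corr}) is zero, and pushing the expectation through the linear convolution/decimation operations in (\ref{eqn:mimo_equiv_chan}) gives $g_{mm'}^{kk'}(l)\to 0$; thus all multiuser interference terms vanish. For $k=k'$, pushing the expectation inside in the same way and substituting $\mathbb{E}\{(H_m^{i,k})^* h_{i,k}(l)\}=\rho_m(l)$ from (\ref{eqn:channel_corr}) yields exactly the claimed expression (\ref{eqn:MRC_equiv_response}). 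A separate, short argument handles the noise: $[\bnu'_{m,n}]_k=\frac{1}{N}\sum_i (H_m^{i,k})^*\nu_{i,m,n}+o(1)$, and since $\nu_{i,m,n}$ is independent of the channels, zero-mean, and of bounded variance, the LLN gives $\bnu'_{m,n}\to\bzero$ almost surely.

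Having established these limits, I would plug them back into (\ref{eqn:mimo_est_d1}) to obtain, for the $k$th user, $\hat d^k_{m,n}\to \Re\{\sum_{m',n'} G^{kk}_{mm',nn'}\, d^k_{m',n'}\}$, where the sum is purely from the same user. Because the $d^k_{m',n'}$ are real-valued and mutually uncorrelated with unit variance, splitting the $(m',n')=(m,n)$ term from the rest identifies the useful signal amplitude as $\Re\{G^{kk}_{mm,nn}\}$ and the residual interference power as $\sum_{(m',n')\ne(m,n)} \Re^2\{G^{kk}_{mm',nn'}\}$, which delivers (\ref{eqn:SINR_sat}). Finally, I would note that for ZF and MMSE the weight reduces asymptotically to the MRC weight: since $\frac{1}{N}\bH_m^{\rm H}\bH_m\to\I_K$, both $(\bH_m^{\rm H}\bH_m)^{-1}$ and $(\bH_m^{\rm H}\bH_m+\sigma_\nu^2\I_K)^{-1}$ converge to $\frac{1}{N}\I_K$, so the same deterministic limit (and thus the same saturation SINR) applies to all three combiners.

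The main obstacle I anticipate is not any single algebraic step but rather justifying that the expectation may be moved inside the convolution-and-decimation operator in (\ref{eqn:asym_equiv_chan}); this needs a brief appeal to linearity together with dominated convergence (or simply the finite support of $p(l)$ over the relevant window), and care to handle the normalization by $\bD_m$ so that the ratio of two LLN limits is valid. Everything else is bookkeeping around (\ref{eqn:Gkk'}) and (\ref{eqn:channel_corr}).
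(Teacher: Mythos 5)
Your proposal is correct and follows essentially the same route as the paper's proof: the paper likewise splits the limit of $G^{kk'}_{mm',nn'}$ via the correlation in (\ref{eqn:channel_corr}) into the $k'\neq k$ case (multiuser interference and, by a similar argument, noise vanish) and the $k'=k$ case (yielding (\ref{eqn:MRC_equiv_response}) and the loss of the orthogonality (\ref{eqn:orthogonality}) due to $\rho_m(l)$), with the ZF/MMSE reduction to MRC noted immediately after the proposition. Your version simply fills in the bookkeeping (the $\bD_m$ normalization, exchanging expectation with convolution, and the explicit signal/interference power split) that the paper leaves implicit.
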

\begin{proof}
As suggested by (\ref{eqn:channel_corr}), when $k'\neq k$, the channel response tends to zero. Hence, multiuser interference fades away. A similar argument can be developed for the noise contribution. However, when $k' = k$, which implies the interference from the same user on itself, the channel response tends to (\ref{eqn:MRC_equiv_response}). Notice that due to the presence of $\rho_m(l)$, the orthogonality condition of (\ref{eqn:orthogonality}) does not hold anymore even with an infinite number of BS antennas. Hence, some residual ISI and ICI will remain and will cause the SINR to saturate at the level in (\ref{eqn:SINR_sat}). 
\end{proof}

Although the above discussions and analysis was made for MRC, we note that Proposition \ref{prp:saturation1} is valid for the ZF and MMSE combiners as well. In particular, for the ZF and MMSE combiners, one may use the fact that due to the law of large numbers, when $N$ grows large, $\frac{1}{N} \bH_m^{\rm H} \bH_m$ tends to $\eye_K$, and hence the ZF and MMSE matrices in (\ref{eqn:mrc_zf_mmse}) tend to that of the MRC, \cite{ngo2013energy}. Thus, the same asymptotic SINR value and channel impulse response as for the MRC can be obtained for the ZF and MMSE combiners.

\section{Proposed Prototype Filter Design Method} \label{sec:prototype_modify}

As discussed in the previous section, even with an infinite number of BS antennas, some residual ICI and ISI remain due to the correlation between the combining tap values and the channel impulse responses between the MTs and the BS antennas. As a solution to this problem, in this section, we propose a prototype filter design method to remove the above correlation.

In (\ref{eqn:MRC_equiv_response}), the problematic term that leads to the saturation issue is the modulated channel PDP, $\rho_m(l)$. In the absence of this term, the channel response $g^{kk}_{mm'}(l) = \big( p_{m'}(l) \ast p^*_m(-l) \big)_{\downarrow \frac{M}{2}}$ does not incur any interference and the orthogonality condition is completely satisfied, provided that $q(l) = p(l) \ast p^*(-l)$ is a Nyquist pulse. This observation suggests that we can modify the prototype filter used at the BS such that 
\be \label{eqn:q_tilde}
q(l) = p(l) \ast \rho(l) \ast \tilde{p}^*(-l) ,
\ee
is still a Nyquist pulse. In (\ref{eqn:q_tilde}), $\tilde{p}(l)$ denotes the modified prototype filter. Applying a discrete-time Fourier transform (DTFT) to (\ref{eqn:q_tilde}), we have
\be
Q(\omega) = P(\omega) \bar{\rho}(\omega) \tilde{P}^\ast(\omega) ,
\ee
where $\bar{\rho}(\omega)$ denotes the DTFT of $\rho(l)$. We note that since $q(l) = p(l) \ast p^*(-l)$, we may write $Q(\omega) = | P(\omega) |^2$. Thus, 
\be \label{eqn:modified_P}
\tilde{P}(\omega) = \frac{P(\omega)}{\bar{\rho}^\ast (\omega)} .
\ee
Finally, applying an inverse-DTFT to $\tilde{P}(\omega)$ will give us the impulse response of the modified prototype filter, $\tilde{p}(l)$. The following proposition summarizes the above results.

\begin{proposition} \label{prp:proto_modif}
The SINR saturation problem can be resolved by incorporating the modified prototype filter $\tilde{P}(\omega) = \frac{P(\omega)}{\bar{\rho}^\ast (\omega)}$ at the BS. Consequently, as $N$ grows large, ISI and ICI in addition to the effects of multiuser interference and noise tend to zero, and arbitrarily large SINR values can be achieved.
\end{proposition}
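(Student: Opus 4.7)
The plan is to repeat the asymptotic analysis of Section~\ref{sec:mmimo_fbmc} with the BS analysis filter $p(l)$ replaced by $\tilde p(l)$, and show that the ``problematic'' PDP factor $\rho_m(l)$ gets absorbed so that the effective post-combining pulse reduces to the one obtained in the ideal, channel-free FBMC setting. Once that reduction is in place, the original real-domain orthogonality of the basis functions $a_{m,n}(l)$ carries over and forces the same-user ISI/ICI to vanish. The vanishing of multiuser interference and noise, which in Proposition~\ref{prp:saturation1} followed only from the independence of the fading across users and antennas plus the law of large numbers, is unaffected by the filter change, so only the same-user part of the asymptotic picture actually needs to be re-examined.

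First I would retrace the derivation of (\ref{eqn:asym_equiv_chan})--(\ref{eqn:channel_corr}) verbatim, but with $p_m^\ast(-l)$ replaced by $\tilde p_m^\ast(-l)$ inside the matched filter at each antenna. Since the MRC weights $H_m^{i,k}$ are determined by the physical channel at the center of the $m^{\rm th}$ subcarrier and do not depend on the analysis filter, the correlation $\mathbb{E}\{(H_m^{i,k})^\ast h_{i,k'}(l)\}$ is unchanged and still equals $\rho_m(l)\delta_{kk'}$. Thus for $k'=k$,
\[
g^{kk}_{mm'}(l) \to \bigl(p_{m'}(l)\ast\rho_m(l)\ast\tilde p_m^\ast(-l)\bigr)_{\downarrow \frac{M}{2}} .
\]

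The decisive step is then a frequency-domain evaluation of this convolution. Using the modulation property, the DTFT of the non-decimated sequence is $P(\omega-\tfrac{2\pi m'}{M})\,\bar\rho(\omega-\tfrac{2\pi m}{M})\,\tilde P^\ast(\omega-\tfrac{2\pi m}{M})$; substituting the design rule $\tilde P(\omega)=P(\omega)/\bar\rho^\ast(\omega)$ cancels $\bar\rho$ and collapses the expression to $P(\omega-\tfrac{2\pi m'}{M})\,P^\ast(\omega-\tfrac{2\pi m}{M})$, which is the DTFT of $p_{m'}(l)\ast p_m^\ast(-l)$. Consequently
\[
g^{kk}_{mm'}(l) \to \bigl(p_{m'}(l)\ast p_m^\ast(-l)\bigr)_{\downarrow \frac{M}{2}},
\]
i.e., exactly the post-analysis pulse one would obtain in a noiseless, channel-less FBMC system. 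Combined with the phase factor $e^{j(\theta_{m',n'}-\theta_{m,n})}$, the asymptotic coefficient $G^{kk}_{mm',nn'}$ coincides with the complex inner product $\langle a_{m',n'}(l),a_{m,n}(l)\rangle$, so taking the real part and invoking the real-orthogonality (\ref{eqn:orthogonality}) yields $\Re\{G^{kk}_{mm',nn'}\}=\delta_{mm'}\delta_{nn'}$. Pushed through (\ref{eqn:SINR_sat}), this eliminates every term from the denominator, and together with the still-vanishing multiuser and noise contributions establishes that the SINR can grow without bound.

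The main care point, I expect, is the well-posedness of the design rule rather than the algebra: dividing by $\bar\rho^\ast(\omega)$ requires $\bar\rho(\omega)$ to be essentially nonvanishing on the support of $P(\omega)$, and the resulting $\tilde P(\omega)$ must still correspond to a realizable, well-localized impulse response $\tilde p(l)$. The orthogonality restoration itself is a short DTFT/Nyquist argument, but one must also be careful that ``modifying the prototype filter at the BS'' is interpreted as changing the analysis filter only, so that the transmit waveform, the MRC weights, and the channel statistics (and hence the correlation in (\ref{eqn:channel_corr})) remain untouched; otherwise the cancellation of $\bar\rho$ does not go through.
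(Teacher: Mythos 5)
Your proposal is correct and follows essentially the same route as the paper: the text preceding the proposition already performs the frequency-domain cancellation of $\bar{\rho}(\omega)$ via (\ref{eqn:q_tilde})--(\ref{eqn:modified_P}), and the paper's proof then simply observes that the equivalent response (\ref{eqn:MRC_equiv_response}) reduces to the ideal Nyquist case while the multiuser and noise terms vanish exactly as in Proposition~\ref{prp:saturation1}. Your explicit modulation-property bookkeeping and the caveat about $\bar{\rho}(\omega)$ being nonvanishing are sound elaborations of the same argument rather than a different approach.
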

\begin{proof}
Following (\ref{eqn:modified_P}), the equivalent channel impulse response in (\ref{eqn:MRC_equiv_response}) tends to that of an ideal channel. Hence, the effects of ICI and ISI will vanish asymptotically. Note that since the channels of different users are independent (see (\ref{eqn:channel_corr})), the effect of multiuser interference still tends to zero with the modified prototype filter in place. A similar argument applies for the noise contribution.
\end{proof}

It is worth to mention a number of points here. First, we note that in the above approach, only the prototype filter used at the BS is modified and other parts of the FBMC transceiver, including the combining taps, will remain unchanged. Also, it should be noted that according to (\ref{eqn:modified_P}), the modified prototype filter depends on the channel PDP. Hence, the BS needs to estimate the channel PDP to be able to construct $\tilde{p}(l)$. Fortunately, in massive MIMO scenarios, the problem of channel PDP estimation is relatively easy and feasible. In particular, the channel PDP can be determined by calculating the variance of channel impulse responses across different BS antennas. As the number of BS antennas increases, according to the law of large numbers, this estimate becomes closer to the exact channel PDP. Last but not least, we note that in the above analysis, we did not make any assumption about the flatness of the channel response over the bandwidth of the subcarriers. Thus, the result obtained in Proposition \ref{prp:proto_modif} is valid for any frequency-selective channel.

\section{Numerical Results} \label{sec:numerical_results}

In this section, we evaluate the analysis of the previous sections as well as the efficacy of our proposed prototype filter design method using computer simulations. We let $M=256$ and assume there are $K=10$ terminals in the network. The terminals use the PHYDYAS prototype filter, \cite{bellanger2010fbmc}, with overlapping factor of 4, to synthesize their FBMC signals. A normalized exponentially decaying channel PDP, $\rho(l) = e^{-\alpha l} / \big( \sum_{\ell=0}^{L-1} e^{-\alpha \ell} \big), l=0\dots,L-1$, with $\alpha = 0.1$ and $L=40$ is assumed. At the BS side, a modified prototype filter designed according to (\ref{eqn:modified_P}) is used to analyze the received FBMC signals across different antennas. 

Figs.~\ref{fig:modified_P_time} and \ref{fig:modified_P_freq} present the time and frequency responses, respectively, of the modified prototype filter and compare them against the original PHYDYAS filter. Moreover, the sinc pulse, as the pulse-shape of the subcarriers in OFDM, is shown in Fig.~\ref{fig:modified_P_freq} as a reference. As shown, both  prototype filters provide a significantly lower spectral leakage compared to the sinc pulse. It should be mentioned that although the original and modified filters do not differ significantly in shape, they lead to completely different SINR behaviors, as it is shown in the following. 

\begin{figure}[!t]
\centering
\includegraphics[scale=0.51]{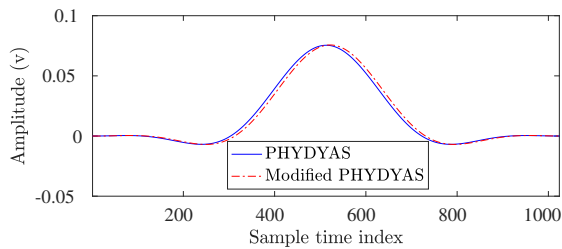} \vspace{-0.2cm}
\caption{Impulse responses of the PHYDYAS and modified PHYDYAS filters. }
\label{fig:modified_P_time}
\end{figure}

\begin{figure}[!t]
\centering
\includegraphics[scale=0.51]{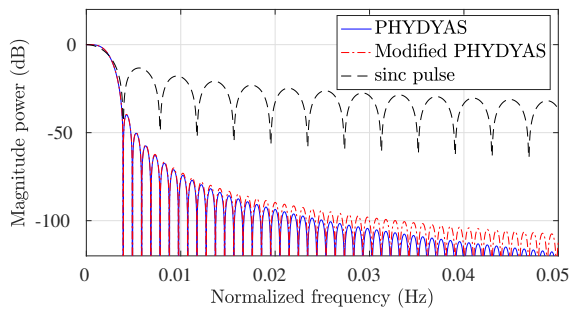} \vspace{-0.2cm}
\caption{Frequency responses of the PHYDYAS and modified PHYDYAS filters and comparison with the sinc pulse. } \vspace{-0.12cm}
\label{fig:modified_P_freq}
\end{figure}

\begin{figure}[!t]
\centering
\includegraphics[scale=0.53]{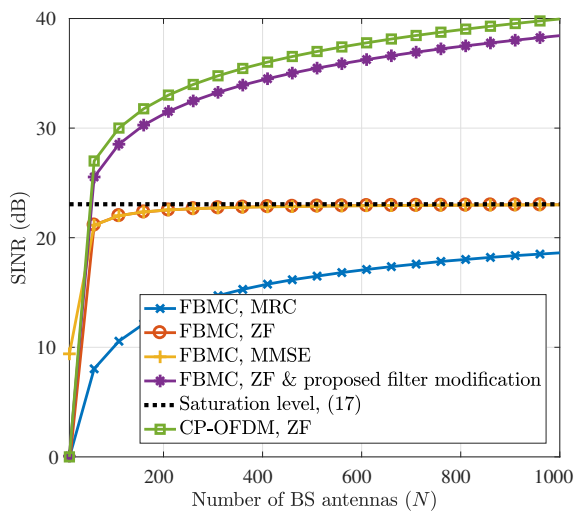} \vspace{-0.2cm}
\caption{SINR performance comparison.} \vspace{-0.4cm}
\label{fig:sinr}
\end{figure}

We next compare the SINR performance of the FBMC transmission with and without prototype filter modification. Fig.~\ref{fig:sinr} shows the average SINR (with averaging over different channel realizations) versus the number of BS antennas. The noise level is selected such that the SNR at the input of the BS antennas is equal to $10$~dB. From Fig.~\ref{fig:sinr} we can see that when the prototype filter is not modified, the SINR performance of all three detectors, i.e., MRC, ZF, and MMSE, tends to the saturation level predicted by (\ref{eqn:SINR_sat}) as $N$ grows large. However, when we incorporate the modified prototype filter, the SINR grows without a limit by increasing $N$. Here, only the case of ZF detector is shown. Also, the SINR performance of OFDM with cyclic prefix (CP-OFDM) and with ZF detector is shown as a benchmark. There is a small difference (around $1.5$ dB) between the SINR of CP-OFDM and FBMC with our proposed modified prototype filter. This is due to the fact that the presence of CP in OFDM leads to a complete removal of all various interference components. In contrast, the FBMC waveform is designed to increase the bandwidth efficiency, by not including any CP overhead and providing much lower out-of-band emission.

\section{Conclusion} \label{sec:conclusion}

In this paper, we studied the performance of FBMC transmission in the context of massive MIMO. We considered single-tap equalization per subcarrier using the conventional linear combiners, i.e., MRC, ZF, and MMSE. One of our findings in this paper was that the correlation between the combining tap values and the channel impulse responses leads to an interference which does not fade away as the BS array size increases. Therefore, the SINR is upper-bounded by a certain deterministic value and arbitrarily large SINR values cannot be achieved. We derived an analytical expression for this upper bound, identified the source of SINR saturation, and proposed a prototype filter design method to remove the above correlation and resolve the problem.

\bibliographystyle{IEEEtran} 
\bibliography{IEEEabrv,MM-FBMC}

\end{document}